\newcommand{\be}{\begin{equation}}
\newcommand{\ee}{\end{equation}}
\newcommand{\bea}{\begin{eqnarray}}
\newcommand{\eea}{\end{eqnarray}}
\newcommand{\ket}{\rangle}
\newcommand{\bra}{\langle}
\newcommand{\I}{\mathds{1}}
\newcommand{\ra}{\rightarrow}
\def\C#1{\mathcal #1}
\definecolor{gray}{gray}{0.9}
\begin{document}
\newtheorem{theorem}{Theorem}
\newtheorem{prop}[theorem]{Proposition}
\newtheorem{corollary}[theorem]{Corollary}
\newtheorem{open problem}[theorem]{Open Problem}
\newtheorem{conjecture}[theorem]{Conjecture}
\newtheorem{definition}{Definition}
\newtheorem{remark}{Remark}
\newtheorem{example}{Example}
\newtheorem{task}{Task}

\title{Quantum resource theory of coding for error correction}

\author{Dong-Sheng Wang}
\email{wds@itp.ac.cn}
\affiliation{CAS Key Laboratory of Theoretical Physics, Institute of Theoretical Physics,
Chinese Academy of Sciences, Beijing 100190, China}
\affiliation{School of Physical Sciences, University of Chinese Academy of Sciences, Beijing 100049, China}
\author{Yuan-Dong Liu}
\email{liuyuandong@itp.ac.cn}
\affiliation{CAS Key Laboratory of Theoretical Physics, Institute of Theoretical Physics,
Chinese Academy of Sciences, Beijing 100190, China}
\affiliation{School of Physical Sciences, University of Chinese Academy of Sciences, Beijing 100049, China}
\author{Yun-Jiang Wang}
\email{yunjiangw@xidian.edu.cn}
\affiliation{School of Telecommunication Engineering, Xidian University, Xi’an, Shann Xi 710071, China}
\affiliation{Guangzhou Institute of Technology, Xidian University, Guangzhou 510555, China}
\affiliation{Hangzhou Institute of Technology, Xidian University, Hangzhou, Zhejiang 311231, China}
\author{Shunlong Luo}
\email{luosl@amt.ac.cn}
\affiliation{Academy of Mathematics and Systems Science, Chinese Academy of Sciences, Beijing 100190, China}
\date{\today}
\begin{abstract}
    Error-correction codes are central for fault-tolerant information processing.
    Here we develop a rigorous framework to describe various coding models 
    based on quantum resource theory of superchannels.
    We find, by treating codings as superchannels, 
    a hierarchy of coding models can be established,
    including the entanglement assisted or unassisted settings,
    and their local versions. 
    We show that these coding models can be  
    used to classify error-correction codes and 
    accommodate different computation and communication settings 
    depending on the data type, side channels, and pre-/postprocessing.
    We believe the coding hierarchy could also inspire new coding models and error-correction methods.
\end{abstract}

\maketitle

\begin{spacing}{1.0}

\section{Introduction}

To protect information against noises, 
error-correction codes are needed~\cite{RL09}.
This is important for both reliable computation and communication.
A fundamental quantity of a noise channel is its capacity,
which is the maximal rate of information transmission.
Shannon’s classical channel coding theorem shows that
the capacity is given by the maximal mutual information 
$I(A:B)$ between the sender Alice, $A$, and receiver Bob, $B$,
over all possible input~\cite{Sha48}.
Quantum noises are mathematically modeled as 
quantum channels,
and a variety of capacities can be 
defined~\cite{Hol99,BNS98,BSS+99,Dev05}.

A notable difference between the classical and quantum cases
are the non-additivity of some channel capacities~\cite{Wat18,Wil17},
leading to many interesting phenomena~\cite{SY08,LLS+14}. 
The primary setting of quantum communication is the direct transmission of quantum data,
with the asymptotic coherent information as the measure of its quantum capacity~\cite{BNS98}.
Many efforts have been devoted to understand this. 
For instance, it is shown that 
weakly noisy channels are limited in their non-additivity~\cite{LLS18},
and many non-additive examples are constructed~\cite{LLS23}.
It was found that adding forward classical communication does not make a difference,
but the back communication can increase its quantum capacity~\cite{BDS97}.
Meanwhile, other quantities are also explored, such as the 
reverse coherent information~\cite{GPL09} and Rains information~\cite{TWW17}. 

In recent years, the quantum resource theory (QRT) has become 
a unifying framework to characterize quantum features~\cite{CG19}. 
A notable example is bipartite entanglement,
for which the bipartite separable states are free, i.e., of zero resource,
and stochastic local operations with classical communication (SLOCC)
are free operations that cannot increase entanglement~\cite{HHH+09}.
For quantum computation,
we recently developed the QRT of modeling and logical gates~\cite{W21_model,W23_ur,W23_qvn} 
which can be used to classify universal quantum computing models
and predict new ones.

There have been a few resource-theoretic approaches for quantum communication, 
e.g., Refs.~\cite{DHW04,TWH20,KCS20,LW19},
and they are mostly QRT of channels~\cite{CG19}.
In these approaches, some channels are identified as useless for communication tasks,
such as the set of replacement channels~\cite{TWH20},
and unitary channels, with maximal quantum capacity
for a given dimension, will be the most resourceful. 
However, this cannot identify the codings that achieve 
a quantum capacity of a channel as resources. 
This motivates us to consider the QRT of codings, 
which are actually superchannels~\cite{CDP08a}.

In this paper, we develop a QRT of codings which are relevant to 
quantum communication and error correction,
and especially can be applied to various situations. 
A coding, including a pair of encoding and decoding operations,
is a superchannel which converts a channel into another.
A free set is the codings that do not work well for
a given noise channel, 
and resources are those that indeed work.
The one-side computational ability of Alice and Bob also matters,
and different codes can be chosen for different purposes~\cite{RL09}.
We find this can be characterized by a hierarchical family of coding models.


We develop a hierarchy of coding models that includes 
the standard settings for quantum communication
and entanglement-assisted one~\cite{Wat18}, and also our two recent refined models~\cite{W22_ca}.
The hierarchy is defined according to two notions of locality,
a one-side computational locality, relevant to Alice's and Bob's computational ability,
and a communication locality, relevant to the entanglement assistance. 
The hierarchy has a nontrivial structure.
This can be seen by their capacities. 
A capacity can be understood as a measure of the maximal resource of codings
in a coding model.
The quantum capacities of a channel $\Phi$ 
are $I(\Phi)$, $(\log d+ I(\Phi))/2$, $I_c(\Phi)$, and $J(\Phi)/2$, respectively,
for models from I to IV,
and for $I(\Phi)$ as the coherent information with maximally mixed state as input,
$I_c(\Phi)$ denoting the asymptotic coherent information,
$J(\Phi)$ as the quantum mutual information (see Table~\ref{tab:capacity}). 
However, $I_c(\Phi)$ is non-additive 
and there is no definite order between $I_c(\Phi)$ and $(\log d+ I(\Phi))/2$.
Therefore, the models I, II, and IV form a sub-hierarchy,
and the models I, III, and IV also form a sub-hierarchy.
Moreover, there could also be more models in this coding family
based on our study of the gaps among those capacities.

This paper contains the following parts. 
Section~\ref{sec:pre} provides the necessary background,
and section~\ref{sec:coding} introduces our definition of coding models.
The quantum capacities of these four models are studied in section~\ref{sec:capacity},
and the resource theory of them are established in section~\ref{sec:qrt}.
We also carry out numerical simulation to study the gaps among 
these quantum capacities for the qubit case in section~\ref{sec:num}.
We conclude in section~\ref{sec:conc}.

\section{Preliminaries}
\label{sec:pre}

\subsection{Quantum channels and superchannels}

A quantum channel $\Phi$ acting on a finite-dimensional Hilbert space $\C X$
is a completely positive, trace-preserving (CPTP) map of the form 
\be \Phi(\rho)= \sum_i K_i \rho K_i^\dagger, \ee
with $K_i$ known as Kraus operators satisfying $\sum_i K_i^\dagger K_i= \I_d$ (identity operator of dimension $d=\text{dim}\C X$),
and states $\rho \in \C D(\C X)$ as trace-class nonnegative semidefinite operators.
We often ignore the subscript of $\I_d$ if there is no confusion. 
Also we will use $\pi_d$ to denote the completely mixed state of dimension $d$,
and simply as $\pi$ if the dimension is implicit.
The above formalism can also be extended to describe channels 
$\Phi \in C(\C X,\C Y)$ that do not necessarily preserve dimension. 

A different representation of channel $\Phi$ is 
an isometry $V$ with
$\Phi(\rho)=\text{tr}_\text{E} V\rho V^\dagger$ for E denoting an `environment' (or `Eve').
The isometry $V$ is formed by Kraus operators $V=\sum_i |i\ket \otimes K_i$
for $|i\ket$ as orthogonal states of E. 
The isometry $V$ forms a part of a unitary circuit $U$ with $V=U|0\ket$, 
for $|0\ket$ as a state of E.
A complementary channel $\Phi^c$ can be defined as the map from the input to E,
and the state of E is 
\be \rho_\text{E}=\sum_{ij} \text{tr}(\rho K_i^\dagger K_j) |j\ket \bra i|. \ee

The notable channel-state duality~\cite{Jam72,Cho75,BZ06,JLF13} also enables the representation of $\Phi$ as a state
\be \omega_{\Phi} := (\Phi \otimes \I) (\omega), \ee 
usually known as a Choi state, for $\omega:=|\omega\ket \bra \omega|$,
and $|\omega\ket:= \sum_i |ii\ket /\sqrt{d}$ known as the ebit.
The rank of the channel $\text{r}(\Phi)$ is the rank of $\omega_{\Phi}$.

\begin{figure}[t!]
    \centering
    \includegraphics[width=0.3\textwidth]{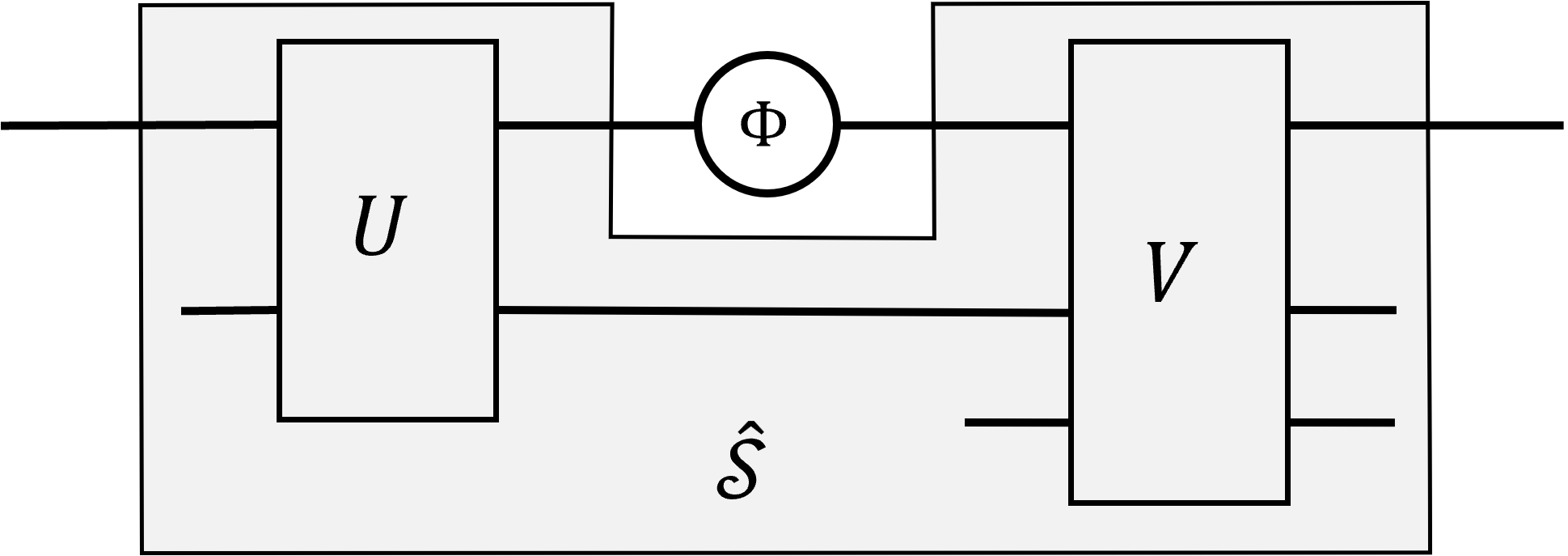}
    \caption{A schematic of quantum superchannel $\hat{\C S}$ (grey area) containing a pre $U$ and a post $V$
    unitary operation (boxes) acting on a channel $\Phi$ (circle).}
    \label{fig:sc}
\end{figure}

The operations on Choi states are further known as 
quantum superchannels~\cite{CDP08a}.
For notation, we use a hat on the symbols for superchannels.
The circuit representation of a superchannel is 
\be \hat{\C S} (\Phi)(\rho)= \text{tr}_a V \; \Phi\; U (\rho\otimes |0\ket \bra 0|),  \ee
here $U$ and $V$ are unitary operators,
and a is an ancilla. 
See Figure~\ref{fig:sc} for an illustration.
The dimension of $V$ can be larger than $U$~\cite{WW23},
but we do not need the details here. 
It is clear that higher-order superchannels can also be defined 
by an iterative use of the channel-state duality,
and this will be used for our resource theory of superchannels.

\subsection{Quantum resource theory}

We follow standard QRT~\cite{CG19} and our recent framework~\cite{W21_model,W23_ur,W23_qvn}. 
Given a set $\C D$,
a resource theory on it is defined by a tuple 
\be (\C F, \C O, \C R) \ee 
with $\C F \subset \C D$ as a set of free elements,
$\C O: \C F \ra \C F$ the set of free operations, 
and $\C R := \C D \backslash \C F$ the set of resources~\cite{CG19}.
The framework of QRT is quite broad.
For instance, the set $\C D$
can be the set of density operators acting on a Hilbert space,
or a set of unitary operators. 
A measure of resource can be defined but here we do not need to review it.

A universal resource theory is further defined as 
\be (\C F, \C O, \C R, \C U) \ee
with an additional set $\C U \subset \C R$ as 
the set of universal resource,
compared with a usual resource theory. 
The universality means that 
$ \C O (\C F \otimes \C U) $
can simulate any other process $ \C O (\C F \otimes \C R) $ efficiently. 
Here, efficiency means that the costs for the free operations $\C O$,
free elements $\C F$, and universal resources $\C U$ 
all do not grow exponentially fast with the size of the given process.
Another way to see this is that 
universal resource $\C U$ optimizes the resource measure in $\C R$.

For a given total set $\C D$,
we can define a hierarchy of resource theories.
If $\C F_1 \subset \C F_2$ for two resource theories, 
then $\C O_1 \subset \C O_2$,
and $\C R_1 \supset \C R_2$, i.e.,
more elements are treated as resourceful in the first theory.
However, to achieve universality, 
more resource power is needed for the first theory, 
denoted as $\C U_1 \succ \C U_2$.
Then there exist the following conversions between two universal resources 
\be (\C O_2 \backslash \C O_1 ) u_2 =u_1, \quad  \C O_1 u_1 =u_2, \label{eq:rconv}\ee 
for universal resource $u_{1,2}\in \C U_{1,2}$, 
modulo free elements.

\subsection{Entropy and distance measures}

The von Neumann entropy is defined as 
\be H(\rho)=\log d-R(\rho \| \pi_d), \ee 
for the quantum relative entropy $R(\rho \| \sigma)=\text{tr} \rho \log \rho - \text{tr} \rho \log \sigma$, $\forall \rho, \sigma \in \C D(\C X)$ 
(see~\cite{Wat18} for more details).
It is monotonic $H(\Phi(\rho))\geq H(\rho)$ under unital channels $\Phi$
but not for non-unital ones.
Actually this is a simple example of a QRT,
with $\pi_d$ as the free set, 
unital channels as free operations,
and $R(\rho \| \pi_d)$ as a measure of resource for all non-identity states $\rho$. 

Given a channel $\Phi$,
the coherent information $I_c(\rho,\Phi)$ is defined as~\cite{BNS98}
\be I_c(\rho,\Phi):=H(\Phi(\rho))-H(\Phi\otimes \I_d (|\varphi_\rho\ket) ), \ee
for $|\varphi_\rho\ket$ as a purification of $\rho$.
The quantity $$J(\rho,\Phi):=H(\rho)+ I_c(\rho,\Phi)$$ is the quantum mutual information,
which is always nonnegative.
Let \be I(\Phi):=I_c(\pi_d,\Phi), \ee
which, plus $\log d$, is the quantum mutual information contained in the Choi state $\omega_\Phi$.
It is clear it is additive.
Also $J(\Phi):=\max_\rho J(\rho,\Phi)$ is additive~\cite{AC97},
but $I_c(\Phi):=\max_\rho I_c(\rho,\Phi)$ is not.
These quantities are used to define quantum capacities.

To quantify the distance between channels, 
we use the fidelity between Choi states 
\be F_E(\Phi, \Psi):=  F(\Phi \otimes \I (\omega),  \Psi \otimes \I (\omega)), \ee
for the fidelity $F(\rho,\sigma):=\|\sqrt{\rho}\sqrt{\sigma}\|_1^2$,
with $\|\cdot\|_1$ denoting the trace norm.

\begin{figure}[t!]
    \centering
    \includegraphics[width=0.4\textwidth]{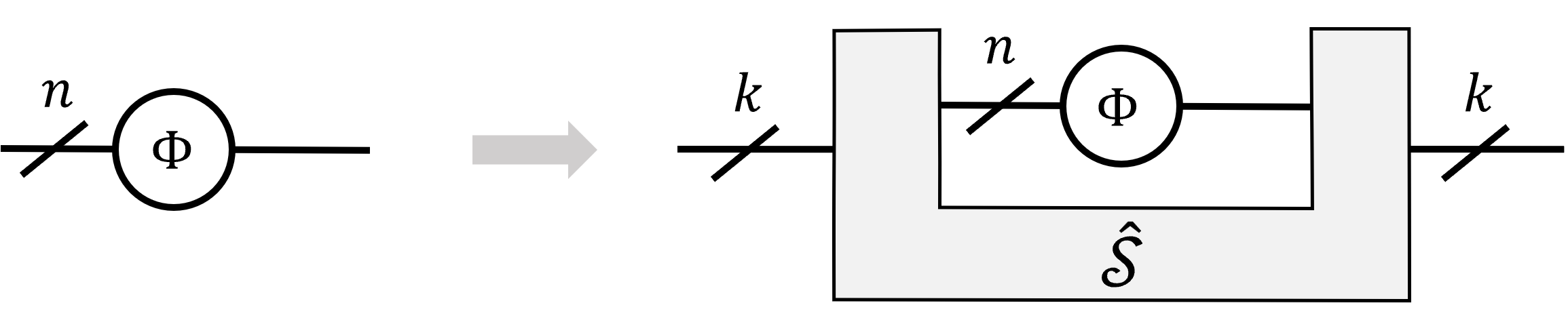}
    \caption{A schematic of quantum coding with a 
    superchannel $\hat{\C S}$ serving as the coding operation converting 
    $n$ uses of a channel $\Phi$ into $k$ uses
    which approximates the identity channel.}
    \label{fig:code}
\end{figure}

\section{Coding models}\label{sec:coding}

A coding task refers to the conversion of $n$ uses of a noise channel $\Phi$
into $k$ approximate uses of the identity channel.
The noise channel may depend on some parameters, $\mu$,
but we will simply denote it as $\Phi$. 
The value $r:={k}/{n}$ is called the coding rate for 
the encoding of $k$ qubits into $n\geq k$ qubits.
The operations on channels are in general superchannels, see Figure~\ref{fig:code}.

\begin{definition}
[Quantum coding] A coding scheme for a channel $\Phi$ is a superchannel $\hat{\C S}$ so that 
\be F_E(\I^{\otimes k},\hat{\C S} (\Phi^{\otimes n}))\geq 1-\epsilon, \label{eq:codeerror}\ee 
with $\epsilon\in [0,1]$ and positive integers $n$ and $k$. 
\end{definition}
Note a primary requirement on the coding is that the tight error bound $\epsilon$
should be smaller than $1-F_E(\Phi^{\otimes k}, \I^{\otimes k})$,
which we name as the ``bare error'' of the channel. 
The coding scheme $\hat{\C S}$ may also contain tunable parameters, $\lambda$.
A code is approximate in general since the coding error $\epsilon$ is not zero,
but it is called quasi-exact~\cite{WZO+20} if $\epsilon(\mu, \lambda, k, n) \ra 0$ occurs
in the parameter space of $(\mu, \lambda, k, n)$.
It becomes exact when $\epsilon$ is exactly zero, 
corresponding to the exact error-correction condition~\cite{KL97}.

An important fact is that realizing superchannels requires ancillary system which is assumed to be noise-free~\cite{WW23}.
Therefore, to justify a coding scheme 
it should fit into the physical settings properly. 
This actually leads to different types of channel capacities 
and the use of QRT to characterize them. 
Here we define four coding models via two notions of locality.

\begin{definition}[Transmission locality]
A quantum communication task from Alice to Bob is transmissionally local 
if there is no pre-shared entanglement between them. 
\end{definition}

When Alice and Bob both are multipartite
and there is a separable partition of their subsystems,
we can define a logical locality.

\begin{definition}[Logical locality]
A quantum communication task from Alice to Bob is logically local 
if there is no entanglement among the subsystems of Alice, 
and the subsystems of Bob. \label{def:logicall}
\end{definition}

\begin{figure*}
    \centering
    \includegraphics[width=0.7\textwidth]{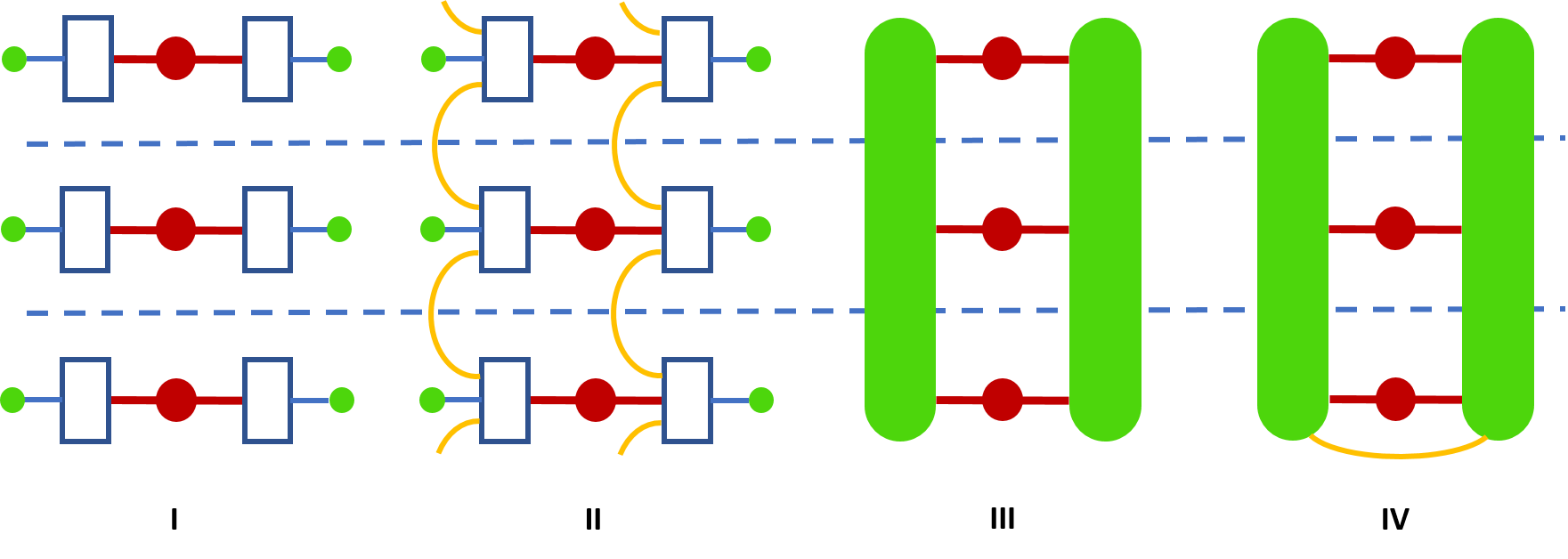}
    \caption{Four models of quantum coding.
    (a) Model I: spatially and logically local.
    The quantum capacity is given by $I(\Phi)$.
    (b) Model II: bipartite correlation between code blocks are allowed, 
    and both the encoding and decoding are logically semi-local. 
    The correlation between code blocks within a player can simulate spatial correlation between corresponding code blocks for the two players. 
    Here the correlation can be simply understood as ebits. 
    The quantum capacity is given by $(I(\Phi)+\log d)/2$.
    (c) Model III is the standard setting for quantum capacity.
    It is logically nonlocal.
    (d) Model IV is the standard setting for entanglement-assisted quantum capacity.
    It is both spatially and logically nonlocal.
    Notation: red balls are a few parallel use of noise channels, blue boxes are local coding, the giant green boxes 
    are nonlocal coding, yellow curves are entanglement assistance.}
    \label{fig:QEC_model}
\end{figure*}

These localities can be well understood by treating them
as variations of the locality or separability to define entanglement~\cite{HHH+09}.
We can then introduce in four coding models
shown in Figure~\ref{fig:QEC_model} forming a coding family.
Roughly speaking, the model I is both spatially and logically local, 
model II is logically, or equivalently spatially, semi-local, 
model III is spatially local but logically nonlocal, 
model IV is both spatially and logically nonlocal.
There can also be other models,
and in principle, there could be an indefinite number of models in the family.
For instance, there are settings when model III is assisted by 
one-way or two-way classical communication~\cite{BDS97}.

Models I and II are defined in our recent paper~\cite{W22_ca},
and was called the ``refined'' setting~\cite{W22_ca},
but here we call it ``local'' for model I, and ``semi-local'' for model II.
They can be treated as the local versions of model III and model IV, respectively.
The model III is the standard setting for quantum capacity~\cite{Dev05},
and model IV is the standard entanglement-assisted setting~\cite{BSS+99}. 

\section{Quantum capacities}\label{sec:capacity}

In this section, we study quantum capacities in these four models.
A quantum capacity can in general be defined as follows.

\begin{definition}
(Quantum capacity of a channel) 
Let $\Phi\in C(\C X, \C Y)$ be a channel 
and $k=\lfloor \alpha n \rfloor$ 
for all but finitely many positive integers $n$ and an achievable rate $\alpha \geq 0$, 
there exists coding superchannel $\hat{\C S}$ defined in a coding model such that 
\be F_E (\I^{\otimes k}, \hat{\C S} (\Phi^{\otimes n}) ) \geq 1-\epsilon\ee
for every choice of a positive real number $\epsilon$
and the quantum capacity of $\Phi$, denoted $Q(\Phi)$,
is defined as the supremum of all $\alpha$.
\end{definition}

\subsection{Models I and III}

For the model I and model III, the coding splits into 
a pair of encoding $\C E$ and decoding $\C D$ operations. 
The quantum capacity in model III is the standard notion of quantum capacity,
and has been proven to equal to the entanglement generation capacity~\cite{Dev05}. 
Model I is a special case of model III, 
and its capacity has been proven to equal to an ebit-distribution capacity~\cite{W22_ca}. 
A feature of model I is that the encoding preserves identity 
$\C E (\I^{\otimes k})=\I^{\otimes n}$.
This can be realized by a mixture of encoding isometry but with random ancillary state,
namely, $\text{tr}_a U(\omega^{\otimes (n-k)} \otimes \I^{\otimes k})$
with $(n-k)$ pair of ebits, half of which are acted upon by $U$ 
while the other half are traced out.
As has been shown~\cite{W22_ca}, this leads to a single-letter capacity for model I.
Below we  present the proofs of the two quantum capacities 
following a unified method. 
We will follow the book~\cite{Wat18} and Theorem~\ref{the:dev} below is  Theorem 8.55 in it.

\begin{theorem}
(Quantum capacity theorem: model I~\cite{W22_ca}) 
The quantum capacity of a channel $\Phi$ defined in model I is 
$Q_\textsc{i}(\Phi)=I(\Phi)$.
\end{theorem}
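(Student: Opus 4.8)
The plan is to prove the two matching inequalities $Q_\textsc{i}(\Phi)\le I(\Phi)$ and $Q_\textsc{i}(\Phi)\ge I(\Phi)$, obtaining both from the general quantum capacity machinery (Theorem~\ref{the:dev}, i.e.\ Theorem~8.55 of~\cite{Wat18}) specialized to the maximally mixed channel input, and exploiting the fact that $I(\Phi)=I_c(\pi_d,\Phi)$ is additive so that no regularization survives.

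For achievability I would feed one half of a maximally entangled logical--reference pair into the identity-preserving encoder $\C E(\cdot)=\mathrm{tr}_a\,U\big(\omega^{\otimes(n-k)}\otimes(\cdot)\big)$ described above, so that the marginal entering $\Phi^{\otimes n}$ is exactly $\pi_{d^n}$. Since this encoder uses only the separable maximally mixed halves of ebits as ancilla and creates no entanglement across code blocks or across the Alice/Bob cut, it is a legitimate Model~I coding. Running the decoupling argument underlying Theorem~\ref{the:dev} with the input state pinned to $\sigma=\pi_d$ rather than an optimized $\sigma$ produces, for all large $n$, a decoder with $F_E(\I^{\otimes k},\hat{\C S}(\Phi^{\otimes n}))\ge 1-\epsilon$ whenever $\alpha< I_c(\pi_{d^n},\Phi^{\otimes n})/n$. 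By additivity this ratio equals $I(\Phi)$ for every $n$, so every $\alpha<I(\Phi)$ is achievable in Model~I and $Q_\textsc{i}(\Phi)\ge I(\Phi)$.

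For the converse, take any Model~I code of rate $\alpha$ with coding error $\epsilon$. Identity preservation forces the state entering $\Phi^{\otimes n}$ to have maximally mixed marginal $\pi_{d^n}$ on the channel input whenever the logical register is half of a maximally entangled pair. The standard entanglement-generation converse --- data processing for the coherent information together with a Fannes-type continuity estimate --- then yields $k\le I_c(\pi_{d^n},\Phi^{\otimes n})+n\,\delta(\epsilon)$ with $\delta(\epsilon)\to0$; invoking additivity once more, $I_c(\pi_{d^n},\Phi^{\otimes n})=n\,I(\Phi)$, hence $\alpha\le I(\Phi)+\delta(\epsilon)$, and letting $\epsilon\to0$ gives $Q_\textsc{i}(\Phi)\le I(\Phi)$.

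I expect the main obstacle to be the achievability side: verifying that the randomized, identity-preserving encoder both genuinely lies in Model~I and still achieves decoupling at the single-letter rate $I(\Phi)$, i.e.\ that replacing the optimized input by $\pi_d$ in the decoupling theorem does not lower the attainable rate below $I_c(\pi_d,\Phi)$, and that the maximally mixed ebit-half ancilla together with the trace-out introduces no correlations forbidden by logical or transmission locality. On the converse side the only delicate point is making precise in what sense ``$\C E(\I^{\otimes k})=\I^{\otimes n}$'' constrains a general, possibly mixed, Model~I encoder, and propagating that constraint through the continuity bounds; this is essentially the argument already carried out in~\cite{W22_ca}, which we follow.
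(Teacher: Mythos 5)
Your proposal is correct and follows essentially the same route as the paper: achievability is the single-letter coherent information of the maximally mixed input (the content of Theorem~8.53 of~\cite{Wat18}, which the paper cites rather than re-deriving via decoupling), and the converse specializes the entanglement-generation converse to ebit distribution with the identity-preserving encoder, so that data processing plus additivity of $I(\Phi)=I_c(\pi_d,\Phi)$ collapses the regularization and yields $Q_\textsc{i}(\Phi)\le I(\Phi)$. The only point worth tightening, which the paper also glosses over, is the normalization in the constraint $\C E(\I^{\otimes k})=\I^{\otimes n}$ (properly, $\C E(\pi^{\otimes k})=\pi^{\otimes n}$), but this does not affect the argument.
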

\begin{theorem}
(Quantum capacity theorem: model III~\cite{Wat18}) 
The quantum capacity of a channel $\Phi$ defined in model III is 
$Q_\textsc{iii}(\Phi)=\lim_{n\ra \infty}\frac{I_c(\Phi^{\otimes n})}{n}$.
\label{the:dev}
\end{theorem}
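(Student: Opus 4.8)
The plan is to establish the two matching inequalities separately, following the standard route~\cite{Wat18,Dev05}. First observe that the regularized quantity is well defined: using product inputs one gets the superadditivity $I_c(\Phi^{\otimes (m_1+m_2)})\ge I_c(\Phi^{\otimes m_1})+I_c(\Phi^{\otimes m_2})$, so by Fekete's lemma $\lim_{n\to\infty}I_c(\Phi^{\otimes n})/n=\sup_n I_c(\Phi^{\otimes n})/n$. Because $I_c$ is not additive this regularization is genuinely necessary; the same skeleton below, with the extra constraint that the encoding preserves identity, instead forces a maximally mixed input and reproduces the single-letter formula $I(\Phi)$ of model~I.

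\textbf{Achievability.} It is enough to achieve, for every fixed $m$ and every $\delta>0$, the rate $I_c(\Phi^{\otimes m})/m-\delta$, since the supremum of such rates over $m$ and $\delta$ equals the claimed capacity. Fixing $m$, put $\Psi=\Phi^{\otimes m}$ and use it i.i.d.; I would show that for an arbitrary input state $\sigma$ the rate $I_c(\sigma,\Psi)$ is achievable per use of $\Psi$, and maximize over $\sigma$ to obtain $I_c(\Psi)=I_c(\Phi^{\otimes m})$. The tool is the decoupling method: pass to the isometric dilation of $\Psi^{\otimes N}$ with environment $\text{E}$, encode the $2^k$-dimensional logical system into a Haar-random subspace of the typical subspace of $\sigma^{\otimes N}$, and combine the one-shot decoupling theorem with the quantum asymptotic equipartition property to conclude that, as long as $k/N<I_c(\sigma,\Psi)$, the joint state of $\text{E}$ and the logical reference is $o(1)$-close in trace norm to a product state. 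Decoupling of the environment is, via Uhlmann's theorem, equivalent to the existence of a decoding channel $\C D$ acting on Bob's output alone with $F_E(\I^{\otimes k},\C D\circ\Psi^{\otimes N}\circ\C E)\to 1$; since such a $\C D$ is an entanglement-free (possibly spatially nonlocal) superchannel, the scheme is admissible in model~III. Rescaling to a rate per use of $\Phi$ (a factor $1/m$) and letting $m\to\infty$ then yields every rate below $\lim_n I_c(\Phi^{\otimes n})/n$.

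\textbf{Converse.} Suppose $\hat{\C S}(\Phi^{\otimes n})=\C D\circ\C E$ with $F_E(\I^{\otimes k},\C D\circ\Phi^{\otimes n}\circ\C E)\ge 1-\epsilon$. Send half of the Choi state of $\I^{\otimes k}$, i.e.\ a maximally entangled state of Schmidt rank $2^k$ shared with a reference $\C R$, through $\C E$, $\Phi^{\otimes n}$ and $\C D$; by Fuchs--van de Graaf the output $\xi_{\C R B'}$ is $\delta$-close in trace norm to that Choi state, with $\delta=\delta(\epsilon)\to 0$. The coherent information $H(\xi_{B'})-H(\xi_{\C R B'})$ of a rank-$2^k$ maximally entangled state is $k$, so the Alicki--Fannes continuity bound gives $H(\xi_{B'})-H(\xi_{\C R B'})\ge k(1-\kappa(\delta))-g(\delta)$ with $\kappa(\delta),g(\delta)\to 0$. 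Applying data processing of the coherent information under $\C D$ and then identifying the pre-decoding coherent information with $I_c(\C E(\pi_{2^k}),\Phi^{\otimes n})\le I_c(\Phi^{\otimes n})$ gives $k(1-\kappa(\delta))-g(\delta)\le I_c(\Phi^{\otimes n})$. With $k=\lfloor\alpha n\rfloor$, dividing by $n$, letting $n\to\infty$ and then $\epsilon\to 0$ yields $\alpha\le\lim_n I_c(\Phi^{\otimes n})/n$.

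The hard part is the achievability: the sensitive ingredients are the quantitative decoupling estimate together with the quantum AEP, which must converge to $I_c(\sigma,\Psi)$ at a rate uniform enough to survive the outer limit $m\to\infty$, and the bookkeeping that converts between the entanglement fidelity $F_E$ used to define a code, the trace-norm bounds produced by the decoupling theorem, and the fidelity recovered from the Uhlmann decoder (via the equivalence, due to Barnum, Knill and Nielsen, between entanglement fidelity and worst-case subspace fidelity). The converse is essentially routine once continuity of the coherent information and the data-processing inequality are in hand.
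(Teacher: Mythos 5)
Your outline is correct and follows the same standard route as the paper, which simply invokes Theorems 8.53--8.55 of Watrous's book: achievability via random (isometric) encodings into typical subspaces with an Uhlmann-type decoder, and a converse that tests the code on half of a maximally entangled state and combines entropy continuity with data processing. The only point worth tightening is the converse step ``identifying the pre-decoding coherent information with $I_c(\C E(\pi_{2^k}),\Phi^{\otimes n})$'': for a non-isometric encoder the reference does not purify $\C E(\pi_{2^k})$, so you need the standard lemma (concavity of conditional entropy, or reduction to isometric encoders) that mixed references cannot exceed $I_c(\Phi^{\otimes n})$ --- which still yields exactly the inequality you use.
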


\begin{proof}
Theorem 8.53 in Ref.~\cite{Wat18} proves $I(\Phi)\leq Q_\textsc{i}(\Phi)$,
together with Theorem 8.54 in Ref.~\cite{Wat18} gives $I_c(\Phi^{\otimes n})\leq n Q(\Phi)$.
To prove $I_c(\Phi^{\otimes n})\geq n Q_\textsc{iii}(\Phi)$,
 Theorem 8.55 in Ref.~\cite{Wat18} refers to the entanglement generation scheme,
namely, for any rate $\alpha \leq Q_\textsc{iii}(\Phi)$,
there exists a state $|u\ket \in \C X^{\otimes n} \otimes \C Z^{\otimes k}$ and a decoding channel 
$\C D \in C(\C Y^{\otimes n}, \C Z^{\otimes k})$ such that 
\be F(\omega^{\otimes k}, (\C D \Phi^{\otimes n} \otimes \I^{\otimes k}) (|u\ket))\geq 1- \epsilon.\ee
Then the result follows from  
\be H(\C D \Phi^{\otimes n} \otimes \I^{\otimes k} (|u\ket))\leq 2\delta m+1, \ee
and 
\be H(\C D \Phi^{\otimes n} (\rho))\geq m-\delta m-1, \ee
and the data-processing inequality~\cite{BNS98},
for $\rho$ as the reduced state of $|u\ket$ on $\C X^{\otimes n}$.
To prove $I(\Phi)\geq Q_\textsc{i}(\Phi)$,
it specifies to ebit-distribution with $|u\ket$ replaced by $\omega^{\otimes k}$
and encoding with $\C E (\I^{\otimes k})=\I^{\otimes n}$.
The above two inequalities become
\be H(\C D \Phi^{\otimes n} \otimes \I^{\otimes k} (\omega^{\otimes k}))\leq 2\delta m+1, \ee
and 
\be H(\C D \Phi^{\otimes n} (\pi^{\otimes k}))\geq m-\delta m-1, \ee
which implies $I(\Phi)\geq Q_\textsc{i}(\Phi)$.
\end{proof}

By comparing the above two theorems, 
we see in model III
it allows more general $|u\ket$ and its reduced state $\rho$,
instead of copies of ebit $\omega$ and the completely mixed state $\pi$.
This reflects the difference between ebit distribution and entanglement generation.
For the former, only product of ebits are allowed, 
corresponding to local coding schemes,
while for the later the state $|u\ket$ can be multipartite entangled,
corresponding to general nonlocal coding schemes.

Also the preservation of identity is important for the proof of model-I capacity,
which was implicitly used~\cite{W22_ca} but not emphasized.
This relies on random encoding which is not isometric but 
isometric encoding will suffice~\cite{BKN00}.
If a mixture of isometric encodings guarantees a high entanglement fidelity $F_E$,
then each of the isometric encoding also works.
This also applies to model III,
but for models II and IV below, 
the encoding is not isometric if ignoring the noiseless entanglement assistance.

\begin{table*}[htb]
    \centering
    \caption{A table of the channel capacities studied and mentioned in this work.}
    \begin{tabular}{|c|c|c|c|c|}\hline 
                & Model I & Model II & Model III & Model IV \\ \hline 
        Quantum &  $Q_\textsc{i}=I(\Phi)$  & $Q_\textsc{ii}=[\log d+I(\Phi)]/2$ & 
        $Q_\textsc{iii}=\lim_{n\ra \infty}\frac{I_c(\Phi^{\otimes n})}{n}$ & $Q_\textsc{iv}=J(\Phi)/2$ \\ \hline 
        Classical & $C_\textsc{i}=\chi_\textsc{ort}(\Phi)$ & 
        $C_\textsc{ii}=\log d+I(\Phi)$ & 
        $C_\textsc{iii}=\lim_{n\ra \infty}\frac{\chi(\Phi^{\otimes n})}{n}$ & 
        $C_\textsc{iv}=J(\Phi)$ \\ \hline 
        Private & $P_\textsc{i}=\Delta C_\textsc{i}$ & 
        $P_\textsc{ii}=[\log d+I(\Phi)]/2$ & 
        $P_\textsc{iii}=\Delta C_\textsc{iii}$ & $P_\textsc{iv}=J(\Phi)/2$  \\ \hline 
    \end{tabular}
    \label{tab:capacity}
\end{table*}

\subsection{Models II and IV}

Models II and IV are entanglement-assisted (EA).
In model II,
due to the logical locality only bipartite entanglement is allowed,
and the perfect resource is ebit. 
For model IV, any pre-shared entangled state is allowed.
Note that in model II, the shared ebits are within each player, Alice or Bob.
These pre-shared state can be understood as being generated by a pre round of 
coding on it, and then we do not need to consider noises on it anymore.
For notation, the model II was 
simply denoted as ``EA'' in our previous paper~\cite{W22_ca}.
A simple but important fact to verify is that 
this can be used to generate remote ebits between the two players by quantum teleportation.
Also remote ebits can be used to generate ebits at Alice's or Bob's side.
Therefore, the logical semi-locality is equivalent to spatial semi-locality.

In the EA settings, 
an important phenomenon is that quantum capacity is half of its classical capacity
of a channel based on quantum teleportation and dense coding~\cite{Wat18}. 
A technical part is the usage of EA Holevo quantity $\chi_\textsc{ea}$ for the standard EA setting~\cite{Hol99},
which is restricted to an orthogonal case,
denoted `EAO', due to the usage of ebit-assistance~\cite{W22_ca}.
The classical capacities are firstly expressed as Holevo quantities 
$$C_\textsc{iv}(\Phi)=\lim_{n\ra \infty} \frac{\chi_\textsc{ea}(\Phi^{\otimes n})}{n}$$
and $C_\textsc{ii}(\Phi)=\chi_\textsc{eao}(\Phi)$,
and then related to coherent information.
For simplicity, we recall the two theorems below and reproduce a unified proof.
Theorem~\ref{the:bss} below is  Theorem 8.41 in Ref.~\cite{Wat18}.

\begin{theorem}
(Quantum capacity theorem: model II~\cite{W22_ca}) 
The quantum capacity of a channel $\Phi$ defined in model II is 
$Q_\textsc{ii}(\Phi)=(\log d+ I(\Phi))/2$.
\end{theorem}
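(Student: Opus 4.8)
The plan is to obtain $Q_\textsc{ii}(\Phi)=(\log d+I(\Phi))/2$ by computing the classical capacity $C_\textsc{ii}(\Phi)$ of model II and then halving it by quantum teleportation and superdense coding, the same route as in Theorem~\ref{the:bss} for the standard EA setting, now specialized to ebit assistance (equivalently, logical semi-locality). For the reduction $Q_\textsc{ii}(\Phi)=C_\textsc{ii}(\Phi)/2$ the key point is that in model II ebits are effectively free: they are available within each player by assumption, and bipartite ebits between Alice and Bob are produced at negligible rate by a pre-round of teleportation-type coding consuming only a sublinear number of channel uses. Given this, superdense coding turns any protocol sending $Q$ qubits per channel use into one sending $2Q$ classical bits per use, so $C_\textsc{ii}(\Phi)\ge 2Q_\textsc{ii}(\Phi)$; conversely, teleportation turns any protocol sending $C$ classical bits per use into one sending $C/2$ qubits per use, so $Q_\textsc{ii}(\Phi)\ge C_\textsc{ii}(\Phi)/2$. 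Both conversions preserve the logical semi-locality of the encoding and decoding, hence are admissible in model II, and the two bounds combine to $Q_\textsc{ii}(\Phi)=C_\textsc{ii}(\Phi)/2$.

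Next I would show $C_\textsc{ii}(\Phi)=\chi_\textsc{eao}(\Phi)=\log d+I(\Phi)$. For achievability one runs the standard ebit-assisted random code: Alice applies independent random Weyl (generalized Pauli) operators to her halves of a supply of ebits, transmits those halves through $\Phi^{\otimes n}$, and Bob decodes by a joint measurement on his halves together with the channel outputs; the analysis underlying Theorem~\ref{the:bss} then gives an achievable classical rate equal to the quantum mutual information of the Choi state, namely $J(\pi_d,\Phi)=H(\pi_d)+I_c(\pi_d,\Phi)=\log d+I(\Phi)$. The feature separating model II from model IV enters precisely here: ebit assistance, rather than an arbitrary pre-shared pure state, forces the orthogonal encoding and hence pins the channel input to $\pi_d$, so the optimization $\max_\rho J(\rho,\Phi)=J(\Phi)$ that appears in $C_\textsc{iv}$ collapses to the single choice $\rho=\pi_d$. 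The matching converse is a Holevo/data-processing bound on the Choi state, and no regularization is needed because $I(\Phi)$ and $\log d$ are additive, as noted in Section~\ref{sec:pre}; hence $C_\textsc{ii}(\Phi)=\log d+I(\Phi)$ exactly, and combining with the previous paragraph gives $Q_\textsc{ii}(\Phi)=(\log d+I(\Phi))/2$.

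I expect the main obstacle to be not the capacity formula itself but the bookkeeping in the reduction step: one must check that composing the ebit-assisted classical code with teleportation (or with superdense coding) genuinely stays inside model II, i.e.\ that the resulting encoding and decoding remain logically semi-local and that the ebits consumed by teleportation or dense coding are counted separately from those used by the classical code, so that nothing is double counted. A secondary subtlety is to justify rigorously that ebit assistance restricts the usable encodings to the Weyl family, so that the channel input is $\pi_d$; this is exactly what distinguishes $C_\textsc{ii}(\Phi)=\log d+I(\Phi)$ from the unrestricted $C_\textsc{iv}(\Phi)=J(\Phi)$, and hence model II from model IV.
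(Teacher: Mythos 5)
Your proposal is correct and follows essentially the same route as the paper: establish $C_\textsc{ii}(\Phi)=\chi_\textsc{eao}(\Phi)=\log d+I(\Phi)$ by using the uniform Bell-state (EAO) ensemble for achievability and the Holevo/data-processing bound specialized to EAO ensembles for the converse (Lemmas 8.36, 8.39, and 8.40 of Ref.~\cite{Wat18}), then halve via teleportation and dense coding. The paper's proof is exactly this argument, presented jointly with the model IV case.
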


\begin{theorem}
(Quantum capacity theorem: model IV~\cite{BSS+99}) 
The quantum capacity of a channel $\Phi$ defined in model IV is 
$Q_\textsc{iv}(\Phi)=J(\Phi)/2$.
\label{the:bss}
\end{theorem}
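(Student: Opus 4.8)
The plan is to prove the two statements in parallel, mirroring the structure of the proof of Theorems for models I and III, but now in the entanglement-assisted (EA) setting. The key organizing principle is the identity ``quantum capacity $=$ (classical EA capacity)$/2$'' which follows from combining quantum teleportation and dense coding: one round of EA classical communication of $c$ bits together with pre-shared ebits yields $c/2$ qubits of quantum communication, and conversely. So the proof splits into two halves for each model: (i) establish the classical EA capacity formula, i.e. $C_\textsc{iv}(\Phi)=\lim_n \chi_\textsc{ea}(\Phi^{\otimes n})/n$ and $C_\textsc{ii}(\Phi)=\chi_\textsc{eao}(\Phi)$, and (ii) evaluate these Holevo-type quantities in terms of coherent information, getting $\chi_\textsc{ea}(\Phi)=J(\Phi)$ and $\chi_\textsc{eao}(\Phi)=\log d + I(\Phi)$. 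Dividing by two gives $Q_\textsc{iv}=J(\Phi)/2$ and $Q_\textsc{ii}=(\log d + I(\Phi))/2$.

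For Theorem~\ref{the:bss} (model IV), I would cite the standard EA classical capacity theorem of Bennett--Shor--Smolin--Thapliyal: $C_\textsc{iv}(\Phi)=\max_\rho J(\rho,\Phi)=J(\Phi)$, using additivity of $J$ (already stated in the excerpt, attributed to~\cite{AC97}), so that the regularization is unnecessary and $C_\textsc{iv}(\Phi)=J(\Phi)$ single-letter. The teleportation/dense-coding sandwich then gives $Q_\textsc{iv}(\Phi)=C_\textsc{iv}(\Phi)/2=J(\Phi)/2$. The converse direction (that $Q_\textsc{iv}\le J(\Phi)/2$) also follows from the same sandwich together with the direct part of the EA classical capacity converse; alternatively one invokes that EA quantum capacity equals $\tfrac12 \max_\rho J(\rho,\Phi)$ directly as in Theorem 8.41 of~\cite{Wat18}, which is what the excerpt points to.

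For the model-II theorem, the structure is the same but everything is restricted: the entanglement assistance is only \emph{ebits internal to each player} (logical semi-locality), so only \emph{orthogonal} ensembles enter the relevant Holevo quantity, giving $\chi_\textsc{eao}$ rather than $\chi_\textsc{ea}$, and the input state effectively restricted to the maximally mixed state $\pi_d$ (because, as in model I, the encoding preserves identity and one distributes ebits rather than arbitrary entangled states). This forces $C_\textsc{ii}(\Phi)=\chi_\textsc{eao}(\Phi)=\log d + I(\Phi)$ where $I(\Phi)=I_c(\pi_d,\Phi)$, and crucially this quantity is \emph{already additive} (also stated in the excerpt), so no regularization. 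Then $Q_\textsc{ii}(\Phi)=C_\textsc{ii}(\Phi)/2=(\log d + I(\Phi))/2$ by the same teleportation--dense-coding argument, now with ebits that are local to each player but can be converted to remote ebits via the channel itself (the equivalence of logical and spatial semi-locality noted in the text). The achievability uses that a mixture of isometric encodings with high $F_E$ implies some isometric encoding works~\cite{BKN00}; the converse piggybacks on the model-IV converse applied blockwise.

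The main obstacle I anticipate is \emph{not} the capacity evaluations (those reduce to cited single-letter formulas plus the additivity facts already granted) but rather the careful justification that the restricted resource class in model II — ebits shared only within Alice and within Bob, combined with identity-preserving encodings — produces exactly the orthogonal-ensemble Holevo quantity $\chi_\textsc{eao}$ and no more. One must argue both directions: that such resources cannot do better than $\chi_\textsc{eao}$ (this is the delicate converse, requiring that cross-block/cross-player entanglement genuinely cannot be simulated, except for the teleportation channel which is explicitly allowed), and that they achieve it (constructing the semi-local code from the ebit-distribution code of model I plus a layer of teleportation/dense-coding). I would handle this by exhibiting the explicit reduction diagrams in Figure~\ref{fig:QEC_model}(b) and invoking the model-I machinery (identity-preserving random encoding, data-processing) block-by-block, so that the model-II proof becomes ``model-I proof $\circ$ teleportation'' rather than a standalone argument.
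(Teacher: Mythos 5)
Your proposal is correct and follows essentially the same route as the paper: the paper likewise first expresses $C_\textsc{iv}$ as the regularized EA Holevo quantity, sandwiches it between $J(\Phi)$ from below (Lemmas 8.36 and 8.39 of~\cite{Wat18}, via the uniform Bell-state ensemble) and from above (Lemma 8.40), and then invokes the teleportation/dense-coding halving to get $Q_\textsc{iv}(\Phi)=J(\Phi)/2$. The only difference is granularity — you cite the BSST theorem and Theorem 8.41 of~\cite{Wat18} wholesale where the paper unpacks the constituent lemmas — and your extra discussion of why the model-II restriction yields exactly $\chi_\textsc{eao}$ is a reasonable elaboration of a point the paper treats more briefly.
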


\begin{proof}
In Lemma 8.39 of Ref.~\cite{Wat18}, it proves 
\be \chi_\textsc{ea}(\Phi)\geq H(\pi_d)+I_c(\pi_d, \Phi) \ee
for $\pi_d=\Pi/d$ and $d=\text{tr}(\Pi)$ and $\Pi$ is a projector.
The proof actually proves the stronger result
\be \chi_\textsc{eao}(\Phi)\geq H(\pi_d)+I_c(\pi_d, \Phi) \ee
as it uses a completely uniform ensemble of Bell states, $\eta_*$, which is an EAO ensemble.
Applying Lemma 8.36 in Ref.~\cite{Wat18} leads to $C_\textsc{ii}(\Phi) \geq \log d +I(\Phi)$
and $C_\textsc{iv}(\Phi)\geq J(\Phi)$.

 Lemma 8.40~\cite{Wat18} applies to EA ensemble $\eta$ gives 
\be \chi(\Phi\otimes \I (\eta))\leq H(\sigma)+I_c(\sigma,\Phi). \ee
It also applies to any EAO ensemble $\eta_\textsc{eao}$ which becomes 
\be \chi_\textsc{ort}(\Phi\otimes \I (\eta_\textsc{eao}))\leq \log d +I(\Phi). \ee
This leads to $C_\textsc{ii}(\Phi) \leq \log d +I(\Phi)$
and $C_\textsc{iv}(\Phi)\leq J(\Phi)$.
This completes the proof.
\end{proof}

Therefore, we established the quantum capacities for the four models above,
with only the standard model III having a non-additive measure of capacity.
Due to the additivity of capacities for models I, II, and IV, 
the converse quantum Shannon theorem in these models are easy to prove by
following the well established methods~\cite{Win99,ON99,BDH+14,TWW17}.
The capacity in each of the three models also serves 
as the strong converse capacity, meaning that once a rate is larger 
than a capacity, the coding error would converge exponentially fast to 1 
in the asymptotic limit for all possible codings. 

Our framework also applies to classical capacities and private capacities~\cite{Wil17}. 
Here we would not reproduce the details~\cite{W22_ca}.
The channel capacities studied in this work are summarized in Table~\ref{tab:capacity}.
The quantity $\chi$ is the Holevo quantity~\cite{Hol99} 
and $\chi_\textsc{ort}$ here is the Holevo quantity when the classical-to-quantum 
encoding is restricted to being isometric~\cite{W22_ca}.
The private capacities are equal to the quantum ones
for the entanglement-assisted cases, 
and otherwise, they are from the Holevo quantity of 
a channel minus its complementary channel,
hence the notation $\Delta$ in the table.

\section{QRT of coding}\label{sec:qrt}

We now use quantum resource theory (QRT) to characterize the coding models.
As codings are superchannels, 
the QRT of codings is a QRT of superchannels. 
Due to the channel-state duality~\cite{Cho75}, 
it is not hard to formulate it by referring to QRT of other kinds,
especially the QRT of channels~\cite{CG19}. 
Here, the set of objects we consider are superchannels 
that are used for codings. 

To define a family, we need to make sure 
the goal of each model is the same.
For coding, the goal is indeed the same, 
which is to convert a channel into the perfect identity channel with high accuracy. 
The universal resources are the codings that 
achieve the capacity of a channel.

\begin{definition}[QRT of codings]
A QRT of codings $(\C F, \C O, \C R, \C U)$ for a channel $\Phi$ is defined by
a proper set of free superchannels $\C F$ used in a coding model,
which are transmissionally local 
and can only preserve or increase its bare error,
a free set $\C O: \C F \ra \C F $,
and the resource $\C R$ is formed by all allowed
superchannels that can decrease the bare error. 
The universal set $\C U \subset \C R$ contains the codings that 
achieve the capacity of a channel $\Phi$ in a coding model.
\end{definition}

Furthermore, a hierarchy can be defined based on a subset structure of free sets.
We say such coding models belong to a coding family.
The model I have the largest $\C F$, while model IV the smallest, 
but model I has the least powerful $\C U$, while model IV has the most powerful $\C U$.
To prepare for the theorem below,
we clarify a few points. 
All one-way classical communication from Alice to Bob is free in all the models.
For model II, we say an operation is semi-local when 
a local operation can be assisted by bipartite entanglement.
For model III, the back classical communication from Bob to Alice is not allowed,
however, since it will change its capacity~\cite{BDS97}.
Although model III is logically nonlocal but it is spatially local,
so it is not comparable with model II,
and this leads to two sub-hierarchies.

\begin{theorem}[Hierarchy of coding models]
The models I, III, and IV form a hierarchy,
and models I, II, and IV also form a hierarchy in the coding family.
\end{theorem}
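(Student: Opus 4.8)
The plan is to establish the theorem by exhibiting the requisite subset relations among the free sets $\C F_\textsc{i}, \C F_\textsc{ii}, \C F_\textsc{iii}, \C F_\textsc{iv}$ and then invoking the general machinery of hierarchical QRT recalled in Section~\ref{sec:pre}, which guarantees that $\C F_1 \subset \C F_2$ entails $\C O_1 \subset \C O_2$, $\C R_1 \supset \C R_2$, and a corresponding ordering $\C U_1 \succ \C U_2$ of the universal resources. Concretely, I would first verify the chain $\C F_\textsc{iv} \subset \C F_\textsc{iii} \subset \C F_\textsc{i}$: every superchannel that is transmissionally local and bare-error-nondecreasing and works in model~IV (both spatially and logically nonlocal) is in particular an admissible free coding for model~III (spatially local, logically nonlocal), which in turn is admissible for model~I (spatially and logically local), since imposing more locality constraints only shrinks the free set. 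Separately I would verify $\C F_\textsc{iv} \subset \C F_\textsc{ii} \subset \C F_\textsc{i}$, using the fact established in Section~\ref{sec:capacity} that model~II's logical semi-locality is equivalent to spatial semi-locality, so model~II sits strictly between the fully local model~I and the fully nonlocal model~IV.

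Next I would confirm that all four models share the same underlying total set $\C D$ of coding superchannels and the same task --- conversion of $\Phi^{\otimes n}$ into an approximate $\I^{\otimes k}$ --- so that the hierarchy of QRTs is well-defined in the sense of the tuples $(\C F, \C O, \C R, \C U)$. With the free-set inclusions in hand, the induced statements $\C O_\textsc{iv} \subset \C O_\textsc{iii} \subset \C O_\textsc{i}$ and $\C R_\textsc{iv} \subset \C R_\textsc{iii} \subset \C R_\textsc{i}$ (and likewise through model~II) follow formally. For the universal-resource ordering I would appeal to the capacity computations: since $Q_\textsc{i}(\Phi)=I(\Phi) \le Q_\textsc{iii}(\Phi)=\lim_n I_c(\Phi^{\otimes n})/n \le Q_\textsc{iv}(\Phi)=J(\Phi)/2$ and $Q_\textsc{i}(\Phi) \le Q_\textsc{ii}(\Phi)=(\log d + I(\Phi))/2 \le Q_\textsc{iv}(\Phi)$, a universal coding in a more nonlocal model achieves a weakly larger rate, witnessing $\C U_\textsc{iv} \succ \C U_\textsc{iii} \succ \C U_\textsc{i}$ and $\C U_\textsc{iv} \succ \C U_\textsc{ii} \succ \C U_\textsc{i}$, with the explicit conversions between universal resources given by Eq.~\eqref{eq:rconv}. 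The two chains do not merge because, as noted, model~II (spatially semi-local) and model~III (spatially local, logically nonlocal) are incomparable, and indeed $I_c(\Phi)$ and $(\log d + I(\Phi))/2$ have no definite order, so we genuinely get two sub-hierarchies rather than one linear chain.

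The main obstacle I anticipate is not the subset bookkeeping but pinning down the precise definitions of the free sets $\C F$ so that the inclusions are literally true rather than merely morally true. In particular one must be careful that "transmissionally local and bare-error-nondecreasing" is interpreted consistently across models: model~III already forbids pre-shared entanglement, so it is transmissionally local, and its free operations exclude back classical communication (which would alter the capacity per Ref.~\cite{BDS97}); one must check this exclusion is compatible with --- indeed implied by --- the constraints defining $\C F_\textsc{i}$, and that forward classical communication, declared free in all models, does not accidentally break an inclusion. A secondary subtlety is that for models~II and~IV the encoding is non-isometric once the noiseless entanglement assistance is suppressed, so when comparing free sets one should compare the full superchannels (including the ancillary entangled resource) rather than their marginals; framing everything at the level of superchannels on Choi states, as the paper does, resolves this, but it must be stated explicitly. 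Once these definitional matters are nailed down, the theorem reduces to the formal consequences of $\C F_\textsc{iv}\subset\C F_\textsc{ii,iii}\subset\C F_\textsc{i}$ together with the capacity inequalities already proved, and the incomparability of models~II and~III completes the claim of two distinct sub-hierarchies.
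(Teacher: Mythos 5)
Your proposal is correct and follows essentially the same route as the paper: define the free sets $\C F_\textsc{iv}\subset\C F_\textsc{ii},\C F_\textsc{iii}\subset\C F_\textsc{i}$ model by model, let the resource-set inclusions and the two sub-hierarchies follow formally, and note that models II and III are incomparable. The only minor divergence is that you order the universal resources via the capacity inequalities $Q_\textsc{i}\le Q_\textsc{ii},Q_\textsc{iii}\le Q_\textsc{iv}$, whereas the paper verifies the conversion of Eq.~(\ref{eq:rconv}) directly by ignoring, respectively preparing, the entanglement assistance; both arguments are valid and your (correct) worry about pinning down the free-set definitions precisely is in fact the genuinely delicate point that the paper's own proof also leaves informal.
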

\begin{proof}
We prove the theorem by the explicit construction of QRT for each coding model. 
Relative to a logical locality and the transmission locality, 
model I is defined by the free set $\C F_\textsc{i}$ which  
can only preserve or increase the bare error.
For instance, all one-side processing at Alice's or Bob's side is free.
The set $\C F_\textsc{ii} \subset \C F_\textsc{i}$ is 
logically biseparable or semi-local which preserves or increases the bare error.
This selects some ebit-assisted semi-local codings as resources. 
The set $\C F_\textsc{iii} \subset \C F_\textsc{i}$ is transmissionally local but without back classical communication, and logically being $l$-local for $l\in o(n)$ as a small value compared with $n$.
Finally, $\C F_\textsc{iv} \subset \C F_\textsc{ii}, \C F_\textsc{iii}$ 
only allows transmissionally local and logically product operations.
It is also clear there is no set relation between $\C F_\textsc{ii}$ and $\C F_\textsc{iii}$.

Meanwhile, $\C R_\textsc{i} \subset \C R_\textsc{ii} \subset \C R_\textsc{iv}$,
and $\C R_\textsc{i} \subset \C R_\textsc{iii} \subset \C R_\textsc{iv}$.
The subset structures can be shown case by case.
For instance, 
while $\hat{\C S}\in \C R_1$ are the case of $Q_\textsc{i}(\Phi)\geq 0$,
the case of $Q_\textsc{i}(\Phi)=0$ leads to the existance of
multipartite entangled  codings $\hat{\C S}\in \C R_3 \cap \C F_1$
or ebit-assisted codings $\hat{\C S}\in \C R_2 \cap \C F_1$.
Similar arguments also apply to other cases.

To verify the universal resource conversion~(\ref{eq:rconv}),
it is enough to observe that
a coding that works at a higher level also works at a lower level, 
and can be reduced to one that works at a lower level. 
For instance, by ignoring the assisted entangled state $|\eta\ket$ in model IV it reduces to model III.
On the contrary, $|\eta\ket$ can be prepared in model III 
and further used as the assistance in model IV. 
\end{proof}

\subsection{Applications}\label{sec:app}

The benefit of using QRT is to understand these models systematically,
with the capacities as the measure of universal coding resources.
It also highlights the role of local computational ability,
and the trade-off between local computation and codings. 
The local computation belongs to their free sets. 
These models can be chosen for different practical situations.
Suppose a communication task is to send a large amount of data,
namely, highly entangled states,
over a noisy channel.
Below we analyze different strategies in these models.

\begin{itemize}
    \item For model I, Alice and Bob have the largest pre/post computational ability.
Alice can represent the data $|\psi\ket$ as a quantum circuit, $U$,
with \be |\psi\ket=U|\psi_1,\psi_2,\cdots, \psi_k\ket. \label{eq:blockst}\ee
Alice can use a classical channel to send the classical representation of the circuit, $[U]$,
namely, the type and location of each elementary gate in it,
and then only send unentangled qubits over the quantum channel to Bob.
Bob has to perform $U$ according to $[U]$ in order to obtain $|\psi\ket$.
Note $U$ might be of high depths. \\
This scheme can be applied in many settings. 
For instance, for channels that are only slightly noisy
so that no powerful encoding is needed, 
or in blind quantum computation~\cite{BFK09} 
when Alice wants to hide the input data from Bob
but not the algorithm itself,
or in teleportation-based models 
such as the cluster-state model~\cite{RB01} and quantum von Neumann 
architecture~\cite{W22_qvn} wherein a computation is simulated 
by a sequence of gate teleportation on initially unentangled states.
  \item  If the channel is quite noisy, 
one can move on to the model II which always has a nonnegative quantum capacity. 
The ebits play essential roles here.
We find, interestingly, Alice can represent the data $|\psi\ket$ as a
matrix-product state (MPS)~\cite{PVW+07} 
\be |\psi\ket= \sum_{i_1,\dots,i_N} \text{tr} (B A^{i_N} \cdots A^{i_1} ) |i_1\dots i_N\rangle,
\label{eq:mps}\ee
with local tensors $A^{i_n}$ (and a boundary operator $B$),
and then send its circuit representation $[A^{i_n}]$ to Bob,
who can then use them and also ebits as resources to obtain $|\psi\ket$
by only applying constant-depth local operations.
As is well known, MPS are proper forms 
to characterize entanglement, and play essential roles in many-body physics
to describe topological order~\cite{ZCZ+15}.
Here a MPS can be shared remotely by a few parties, 
and may have applications in distributed computing~\cite{WEH18}.
  \item The models III and IV are well known.
From our perspective,
the model III allows any nonlocal encodings 
and any pre and post $l$-local operations on 
the state $|\psi\ket$.
This can indeed describe some nontrivial operations on codes,
such as code switching~\cite{PR13,W21_model},
which is an important scheme to realize universal set of logical gates.
For model IV, Alice would send $|\psi\ket$ as a whole to Bob,
so the only required ability for Bob is to store 
and manipulate each qubit.
\end{itemize}

If for a channel $I(\Phi)\geq 0$, then all the models would work,
with higher-level models achieving larger capacities.
When $I(\Phi)\leq 0$, one has to move on,
e.g., with remote ebits that may be generated by model I 
for another channel $\Psi$ with $I(\Psi)\geq 0$,
or choose other models.
Also note that the models I and II are not the `one-shot' versions, 
since the one-shot setting only allows separable codings,
while I and II allow entangling isometry as codings.

\subsection{Classification of codes}\label{sec:class}

Our classification theory is not only useful for choosing a proper coding model 
in practice, but also for the usage of error-correction codes. 
In the setting of noncooperative communication~\cite{ML14},
the classification and recognition of codes are important. 
When Alice and Bob do not mutually agreed upon the code being used,
Bob has to recognize the type of code in order to
choose a proper decoding algorithm,
which should be adapted to the right type of code.

We know from the classical coding theory~\cite{RL09}
a code is in general being block or convoluntional.
The later is characterized by a temporal order of data
and memory between the encoding of blocks of data,
while for block codes, 
each block of data encodes and decodes separately.
For quantum codes, it turns out the memory effect can be simulated by ebits 
via teleportation, inducing a temporal order of data blocks.
It is then easy to see convoluntional codes 
can be described by model II (using teleportation module among 
the encoding parts),
and also model IV (when many of its large blocks are available).
But note the EA settings do not have to be convoluntional, 
say, when only one block with EA is used for coding. 
The difference between models II and IV, 
and also models I and III, 
is a block is being `small' or `large'. 
This can be characterized by the depth of the encoding circuit for a code,
see Table~\ref{tab:code-type}.

\begin{table}[t!]
    \centering
    \caption{A table of the code types induced by the four coding models 
    classified in this work
    and some stabilizer code examples in literature.
    Here the depth refers to the depth of the encoding circuit for a code.
    Note more refined classification is possible
    by considering more features of codes.}
    \begin{tabular}{|c|c|c|c|c|}\hline 
                & Model I & Model II    & Model III & Model IV \\ \hline 
        Type &  small block     & convolutional & large block & convolutional \\ \hline 
        Depth & small & small & large & large \\ \hline 
        Examples & \cite{Shor96,NC00,WWC+22} & \cite{Chau98,OT03,Wil09,FP14,PTO09} 
        & \cite{Kit03,BE21,BKX+20,Vid07} & \cite{RDR12,WG13,BDH06} \\ \hline 
    \end{tabular}
    \label{tab:code-type}
\end{table}

A large depth of the encoding circuit with local gates 
can lead to large values of entanglement increasing with the system size $n$,
while a small or constant depth circuit cannot do so.
There are many ways to characterize entanglement~\cite{PVW+07,HHH+09,ZCZ+15},
one powerful approach is to express states as MPS
(\ref{eq:mps}),
and the bipartite entanglement entropy $S_E$ in a state satisfies
\be S_E \propto \log \chi \ee 
for $\chi$ as the bond dimension of the entanglement space, 
which is used to carry the logical information.

Small block codes and convolutional codes
will have small values of entanglement entropy $S_E$.
Examples of small block codes are those 
with small $k$ and small encoding circuit depths,
including some well-known small-size codes~\cite{Shor96,NC00}
and symmetry-protected codes~\cite{WWC+22},
and also some convolutional codes~\cite{Chau98,OT03,Wil09,FP14},
and quantum Turbo codes, 
as interleaved convoluntional codes~\cite{PTO09}.

Meanwhile, large codes are those with a large value of $k$,
and most likely also the rate $r=k/n$. 
Notable examples are some high-rate LDPC codes~\cite{BE21}, 
Polar codes~\cite{BKX+20}, the MERA codes~\cite{Vid07} etc.
Some of them are entanglement-assisted~\cite{RDR12,WG13}, 
hence can be expanded to convolutional ones 
when many blocks are used. 

We can also consider the \emph{reverse} problem:
given a code that is promised to be one of the types we know, 
how to determine its type? 
Such a quantum code recognition task is an example of 
quantum hypothesis testing~\cite{Wat18},
but here the goal is to test its type not its formula.
In general, quantum machine learning algorithm~\cite{Wit14} can be employed to serve as 
a classifier, which, however, is resource-intensive. 
This task is also hard for the classical case, 
see Ref.~\cite{LCL24} for a latest study. 
Different from the classical case, 
a quantum test is a POVM that will consume many samples of the state $|\psi\ket$.
Here we lay out an entanglement-based scheme
but more advanced method is necessary. 

There are ways to measure $S_E$ in experiments~\cite{IMP+15,BEJ+19}. 
Once the whole state $|\psi\ket$ is obtained by Bob sent from Alice,
Bob can do a few binary test.
First, it is easy to see if it is EA or not 
since the EA side channel is noise-free and has to be established beforehand.
Then the value of $S_E$ can tell small codes from large ones.
Due to the state form~(\ref{eq:blockst}) for block codes,
    it is with high probability that far apart sites have no entanglement,
    but not the case for convolutional codes.
    We can use Bell test~\cite{Mer93} to distinguish them.
But it is not easy to distinguish convolutional from Turbo codes,
and LDPC from Polar codes by entanglement entropy. 
Therefore, 
more quantities are needed with could be other entanglement measures 
or machine-learned features that deserve further study.

\section{Numerical simulation}\label{sec:num}


To further understand quantum capacities and the nonadditivity,
here we numerically explore the gaps among the quantum capacities for 
the case of qubit channels.
A general qubit channel contains 12 parameters.
This can be seen in the so-called affine representation $\C T$
of the form 
\be \C T= \begin{pmatrix} 1 & 0 \\ \vec{t} & T \end{pmatrix},
\ee 
which is a $4 \times 4$ real matrix.
The vector $\vec{t}$ is the shift of the center of the Bloch ball,
and the matrix $T$ enables the distortion of the ball.
In order to represent a channel succinctly, 
we use $|t|$ and the Frobenius norm $\|T\|_F$ to represent a channel.
A larger $|t|$ means larger non-unitality, 
while a smaller $\|T\|_F$ means larger distortion of the ball. 
In our simulation we do not observe a clear 
dependence on $\|T\|_F$, 
so we focus on the behaviors of capacities as functions of $|t|$.
In our algorithm, given a random qubit channel $\Phi$~\cite{WBOS13}
we use an optimization algorithm from Matlab to compute the capacity quantities.
The rank of $\Phi$ is an input parameter,
and for each rank we randomly sample hundreds of qubit channels. 

\begin{figure}[b!]
    \centering
    \includegraphics[width=0.52\textwidth]{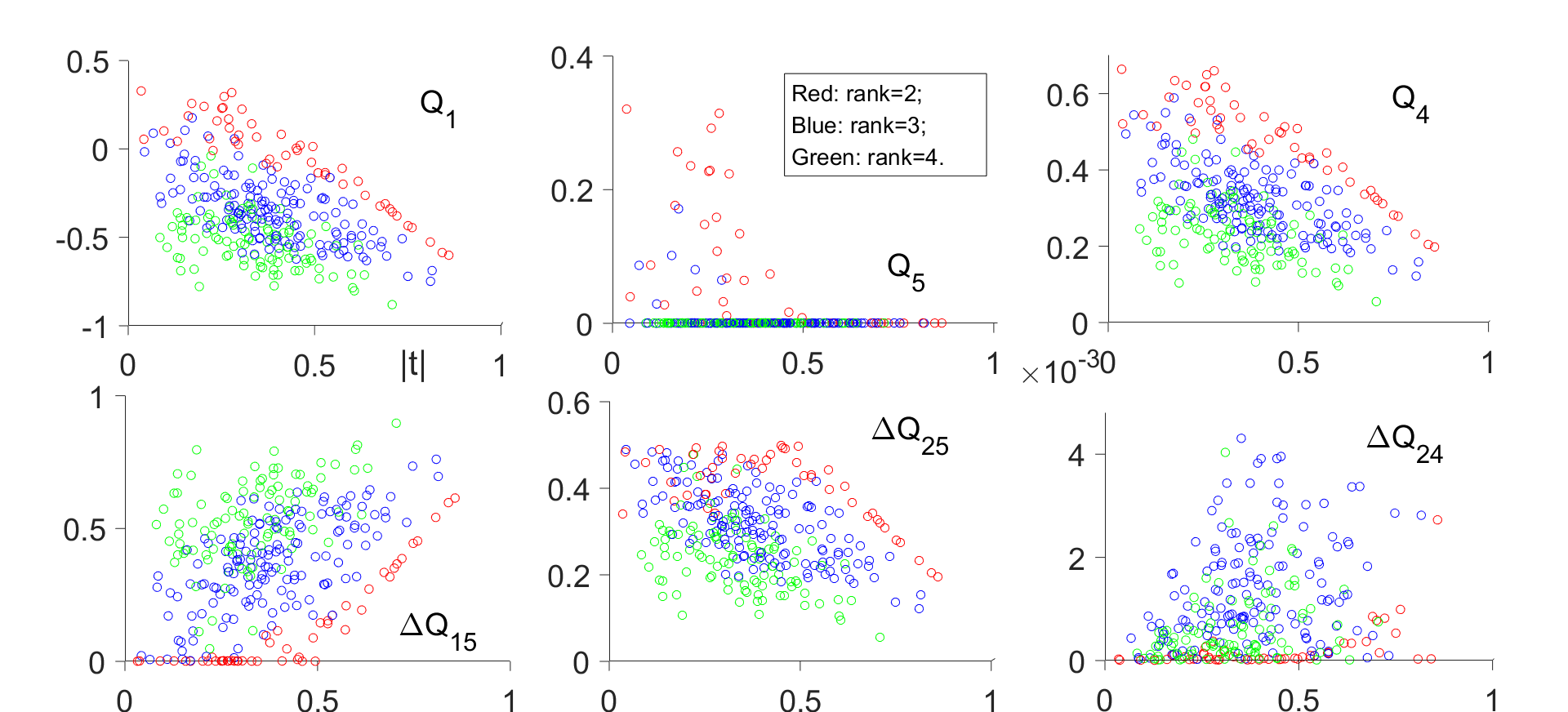}
    \caption{The quantum capacity gaps for qubit channels
    of rank two (red), rank three (blue), and rank four (green).
    Each panel is for a capacity or capacity gap.
    The horizontal axes are all  
    $|t|$, which is the size of the shift vector in the affine 
    representation of a channel.
    }
    \label{fig:capacity1}
\end{figure}

\begin{figure}[t!]
    \centering
    \includegraphics[width=0.5\textwidth]{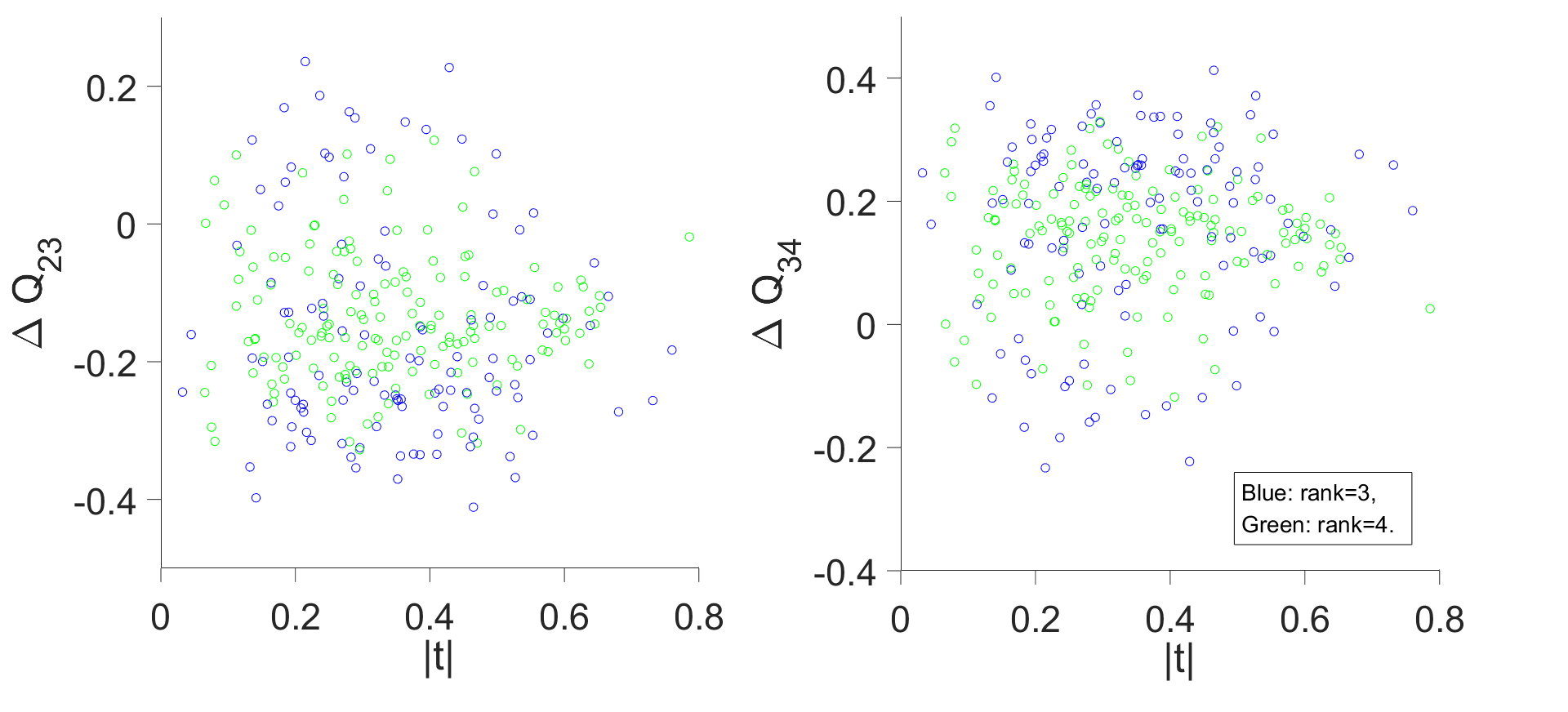}
    \caption{The quantum capacity gaps for qubit channels
    of rank three (blue) and rank four (green).
    Each panel is for a capacity gap.
    The $|t|$ is the size of the shift vector in the affine 
    representation of a channel.
    }
    \label{fig:capacity3}
\end{figure}

\subsection{Model I}

From the general relation between Holevo quantity and coherent information, 
it is easy to see 
\be Q_\textsc{i}(\Phi) \leq C_\textsc{i}(\Phi) \leq I_c(\Phi).\ee 
This means the `one-shot' quantity $Q_\textsc{v}(\Phi):=I_c(\Phi)=\max_{\sigma} I_c(\sigma,\Phi)$
serves as a good upper bound in model I, 
however, it does not serve as a quantum capacity in general. 

It is shown that
$Q_\textsc{v}(\Phi)$ is almost surely positive if $\text{r}(\Phi) \leq d$; otherwise,
it is almost surely being zero~\cite{SD22}. 
Here we numerically confirmed this for qubit channels in Figure~\ref{fig:capacity1}. 
Note we use subscript numbers to simplify the capacity quantities.
For qubit rank-two channels we find $Q_\textsc{v}(\Phi)$ are not only positive, 
but also there is a clear dependence on $|t|$.
There appears to be a transition region at about $|t| \sim 0.5$,
beyond which $Q_\textsc{v}(\Phi)$ are mostly zero,
and $Q_\textsc{i}(\Phi)$ are mostly negative and the upper envelope is almost linear with $|t|$.
For rank-three and rank-four channels, we see that $Q_\textsc{i}(\Phi)$ are mostly negative
while $Q_\textsc{v}(\Phi)$ are mostly zero.
When $Q_\textsc{v}(\Phi)=0$, the optimal input state is pure.
The gap $\Delta Q_{15}=Q_\textsc{v}-Q_\textsc{i}$,
and similarly for $\Delta Q_{25}=Q_\textsc{ii}-Q_\textsc{v}$, 
shows a clear transition region for the rank-2 case.

\subsection{Model II vs model IV}

In model IV, achieving $Q_\textsc{iv}(\Phi)$
is to achieve the regularized EA Holevo capacity,
which shall require a highly-entangled state $|\eta\ket$ 
as a resource which is not a product of ebits.
The gap 
\be \Delta Q_{24}(\Phi)= Q_\textsc{iv}(\Phi)-Q_\textsc{ii}(\Phi)\ee 
is a measure of the nonlocal assistance effect of 
an optimal resource state $|\eta\ket$. 
The $Q_\textsc{iv}$ is upper bounded by $(\log d + Q_\textsc{v})/2$,
so $\Delta Q_{24} \leq \Delta Q_{15} /2$.
We can see from the result in Figure~\ref{fig:capacity1} that 
the gap $\Delta Q_{24}(\Phi)$ is quite small (in the order $10^{-3}$) and 
is more apparent for larger shift $|t|$ for rank-2 channels.
For higher-rank cases, the gap becomes larger
and there is no obvious dependence on $|t|$.
Based on primary simulation tests not reported here,
we expect this gap will get more apparent for higher dimensional channels.

\subsection{Model III}

The quantum capacity $Q_\textsc{iii}(\Phi)$ 
cannot be explicitly computed in general,
but it can be upper bounded.
A nice method is to use the convex decomposition of channels~\cite{Smi08}.
For the qubit case, it is well known that 
a qubit channel can be decomposed as the convex sum 
\be \Phi = p \Phi^\text{g}_1+ (1-p) \Phi^\text{g}_2, \ee 
for two so-called generalized extreme channels
which are channels with rank up to two~\cite{RSW02,WBOS13}.
A qubit generalized extreme channel is either degradable or anti-degradable~\cite{WP07},
for which its quantum capacity is additive. 
Therefore, it has been proposed to use the following as an upper bound of the quantum capacity 
\be Q_\textsc{iii}(\Phi) \leq \inf (p Q_\textsc{iii}(\Phi^\text{g}_1)+ (1-p) Q_\textsc{iii}(\Phi^\text{g}_2)). \ee 
Any such decomposition would serve as a looser upper bound for its capacity,
and we denote such a value as $ Q_\textsc{iii}^\textsc{ub}(\Phi)$.
Also there is a lower bound $ Q_\textsc{iii}(\Phi) \geq Q_\textsc{v}(\Phi)$,
with equality holds for degradable channels~\cite{DS05}.
We also observe that $Q_\textsc{v}(\Phi) \leq Q_\textsc{ii}(\Phi)$.
This is explained by the behavior of $Q_\textsc{i}$:
when it is positive, $Q_\textsc{ii}\geq \frac{1}{2}$ while $Q_\textsc{v}\geq 0$, 
when it is negative, $Q_\textsc{ii}\geq 0$ while $Q_\textsc{v} =0$.

Here we plot the quantity $\Delta Q_{23}= Q_\textsc{iii}^\textsc{ub}-Q_\textsc{ii}$
and $\Delta Q_{34}= Q_\textsc{iv} - Q_\textsc{iii}^\textsc{ub}$
for random qubit channels
of rank three (blue)  and four (green) in Figure~\ref{fig:capacity3}. 
We see that both $\Delta Q_{23}$ and $\Delta Q_{34}$ can be either positive or negative.
There is no clear dependence on the rank and $|t|$ of a noise channel.
Most of $\Delta Q_{34}$ is positive, while
most of $\Delta Q_{23}$ is negative.
A negative $\Delta Q_{34}$ means the bound is not tight,
while a negative $\Delta Q_{23}$ means $Q_\textsc{iii}$ is even smaller than $Q_\textsc{ii}$
for some channels. 
This indeed confirms that the model III does not perfectly lie in between 
the model II and model IV for arbitrary channels.

\section{Conclusion}\label{sec:conc}

In this work, we establish a quantum resource theory approach 
to describe a family of coding models 
that are of importance to quantum communication and error correction.
By treating codings as superchannels,
our approach is broad to describe a few important quantum capacities
and types of codes,
and may also be used to discover new ones. 

Along the line, as we have mentioned there are 
other types of models or quantities, 
including back classical communication~\cite{BDS97},
 the simultaneous classical and quantum communication~\cite{DS05},
    reverse coherent information~\cite{GPL09}, 
 the entanglement cost of channels~\cite{BBC+13},
 and the Rains information~\cite{TWW17}.
Whether proper coding models relating to them can be defined 
and put in the hierarchy of coding family are unclear.
It is also worthy to mention the codings with 
infinite-dimensional systems~\cite{HW01,WPG+12},
which require further investigation to generalize our approach.

\vskip 0.3cm
\begin{center}
{\bf Acknowledgement}
\end{center}

This work has been funded by the National Natural
Science Foundation of China under Grants 12047503
and 12105343 (D.-S.W., Y.-D.L.), 61771377 (Y.-J.W.), the Key R\&D
Project of Shannxi Province and the Natural Science
Foundation of Guangdong Province (Y.-J.W.), and the
National Key R\&D Program of China under Grant
2020YFA0712700 (S.L.).


\end{spacing}
\bibliography{ext}{}
\bibliographystyle{apsrev4-1}

\end{document}